\newtheorem{theorem}{Theorem}
\newtheorem{lemma}[theorem]{Lemma}
\date{}
\title{Large Peg-Army Maneuvers}
\newtheorem{claim}{Claim}
\newcommand{\var}{0.47}
\newcommand{\hide}[1]{}
\newcommand{\pcsat}{\textsc{PCSat}\xspace}
\newcommand{\pegr}{\textsc{Solitaire-Reachability}\xspace}
\newcommand{\pegar}{\textsc{Solitaire-Army}\xspace}
\newcommand{\np}{\textup{NP}}
\newcommand{\srsap}{\textsc{R-Solitaire-Army}\xspace}
\newcommand{\vtrue}{\textsc{True}\xspace}
\newcommand{\vfalse}{\textsc{False}\xspace}
\def\x{{\bf x}}
\def\y{{\bf y}}
\def\z{{\bf z}}
\def\b{{\bf b}}
\def\c{{\bf c}}
\def\u{{\bf u}}
\def\v{{\bf v}}
\def\U{\mathcal{U}}
\def\uno{{\bf 1}}
\def\zero{{\bf 0}}
\author[1]{Luciano Gualà}
\author[2]{Stefano Leucci}
\author[2]{Emanuele Natale}
\author[1]{Roberto Tauraso}
\affil[1]{Università di Roma Tor Vergata. \texttt{\{guala, tauraso\}@mat.uniroma2.it}}
\affil[2]{Sapienza Università di Roma. \texttt{\{leucci, natale\}@di.uniroma1.it}}
\begin{document}

\maketitle

\begin{abstract}
Despite its long history, the classical game of peg solitaire continues to
attract the attention of the scientific community. In this paper, we consider
two problems with an algorithmic flavour which are related with this game,
namely Solitaire-Reachability and Solitaire-Army. In the first one, we show that deciding
whether there is a sequence of jumps which allows a given initial configuration
of pegs to reach a target position is NP-complete. Regarding Solitaire-Army, the aim
is to successfully deploy an army of pegs in a given region of the board in
order to reach a target position. By solving an auxiliary problem with relaxed
constraints, we are able to answer some open questions raised by
Cs\'ak\'any and Juh\'asz (Mathematics Magazine, 2000).
\end{abstract}

\section{Introduction}

\begin{quotation}
    \emph{Not so very long ago there became widespread an excellent kind of
    game, called Solitaire, where I play on my own, but as if with a friend
    as witness and referee to see that I play correctly. A board is filled
    with stones set in holes, which are to be removed in turn, but none
    (except the first, which may be chosen for removal at will) can be
    removed unless you are able to jump another stone across it into an
    adjacent empty place, when it is captured as in Draughts. He who
    removes all the stones right to the end according to this rule, wins;
    but he who is compelled to leave more than one stone still on the
    board, yields the palm. This game can more elegantly be played
    backwards, after one stone has been put at will on an empty board, by
    placing the rest with it, but the same rule being observed for the
    addition of stones as was stated just above for their removal. Thus we
    can either fill the board, or, what would be more clever, shape a
    predetermined figure from the stones; perhaps a triangle, a
    quadrilateral, an octagon, or some other, if this be possible; but such
    a task is by no means always possible: and this itself would be a
    valuable art, to foresee what can be achieved; and to have some way,
    particularly geometrical, of determining this.}
        -- {Gottfried Wilhelm Leibniz}\footnote{Original Latin text in
            \emph{Miscellanea Berolinensia} \textbf{1} (1710) 24, the given
            translation is from \cite{beasley_ins_1985}.}
\end{quotation}
\thispagestyle{empty}

\smallskip
In this work we investigate some computational and mathematical aspects of the
classical game known as \emph{peg solitaire}.
In peg solitaire, we have a grid graph (the \emph{board}) on each of whose nodes (the \emph{holes}) there may be at most one \emph{peg}. The initial configuration of pegs evolves by performing one of the following four moves (the
\emph{jumps}):
for each triple of horizontally or vertically adjacent nodes, if the first and
the second nodes are occupied by pegs and there is no peg on the third one,
then we can remove the two pegs and place a new one on the third node (see Figure
\ref{fig:a}). A \emph{puzzle} of peg solitaire is defined by an initial and a final
configuration, and consists of finding a sequence of moves that transforms the
initial configuration into the final one.

\begin{figure}
	\centering
	\includegraphics[scale=0.7, viewport=145 700 500 760]{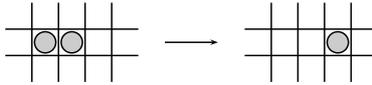}
	\caption{A peg solitaire move.}
	\label{fig:a}
\end{figure}

Because of the complexity generated by such simple rules
\cite{dijkstra_checkers_1992, gardner_unexpected_1991,
bialostocki_application_1998, bell_fresh_2007}, the game has attracted the
attention of many mathematically-inclined minds over its long history
(for which we refer the reader to Beasley's writings on the topic
\cite{bell_minimum_2006,beasley_solitaire:_2008, beasley_john_2015}).
In fact, we started with a
quotation from Leibniz, and the origin of the game may
well precede his time. Despite such a respectable age,
the problem of deciding which peg
solitaire puzzles can be solved is far from settled
\cite{beasley_solitaire:_2008}.

The present paper contributes to such
investigation by considering two specific problems:
\begin{itemize}
    \item \pegr: given a target position and an initial configuration of pegs
        on a finite board, we wish to determine whether there exists a sequence
        of moves that allows some peg to be placed in the given target
        position.
    \item \pegar: given a target position and a region of an infinite board,
        find an initial configuration of pegs inside that region, and a finite
        sequence of moves that allows some peg to be placed in the given target
        position.
\end{itemize}
We study the former problem from a computational point of view and we prove
that \pegr is \np-complete. This result, which we discuss in Section
\ref{sec:hardness}, is a significant step in understanding why peg solitaire
puzzles are intrinsically difficult. They are actually still a good testbed for
artificial intelligence techniques
\cite{kiyomi_integer_2000,jefferson_modelling_2006}. Indeed, the first
computational hardness result for this game was proved in 1990 in
\cite{uehara_generalized_1990}, but was limited to those peg solitaire puzzles
in which the final configuration is required to have only one peg (and hence
the goal was that of cleaning the entire board).

The \pegar problem has received a lot of attention from the mathematical and
game-connoisseur community \cite{bell_minimum_2006, aigner_moving_1997,
berlekamp_winning_2002}. In this body of works, the part of the board where no
pegs are allowed in the initial configuration is called \emph{desert}, and the
typical goal is reaching the farthest distance inside the desert.

In the classical example, introduced by J. H. Conway, the desert is a
half-plane. Conway devised an elegant potential function argument which shows
that, for any initial configuration, and no matter what sequence of moves one
may attempt, no peg can reach a distance larger than four in the
desert \cite{honsberger1976problem}.

Other shapes for the desert have been considered. For instance, in
\cite{csakany_solitaire_2000}, the authors focus on square-shaped and
rhombus-shaped deserts. Here the natural target position is the center of the
square/rhombus, and the goal is that of finding the largest size of the desert
for which the puzzle is solvable. Among other results, in
\cite{csakany_solitaire_2000} it is shown that the $9\times 9$ square-shaped
and $13\times 13$ rhombus-shaped deserts are solvable, while the $13 \times 13$
square-shaped and $17\times 17$ rhombus-shaped are not. Therefore the problems
whether the $11\times 11$ square-shaped and $15\times 15$ rhombus-shaped
deserts are solvable were left open\footnote{Note that the square/rhombus
should have the side of odd length in order for the center of the desert to be
well-defined.}.

In Section \ref{sec:army} we discuss our contribution to the \pegar problem. We
develop a general approach that can be used to attack such kind of puzzles, and
as a byproduct, we are able to show that both the $11\times 11$ square-shaped
and $15\times 15$ rhombus-shaped deserts are actually solvable. The main idea
underlying our technique is  considering an auxiliary problem where the
constraint that any node can have at most one peg is \emph{relaxed} to allowing
pegs to be stacked. We actually allow each node to have any integer number of
pegs, including negative. This setting has the advantage that the order of the
moves is immaterial. The auxiliary problem admits a natural compact Integer
Linear Programming (ILP) formulation which, when the puzzle is simple enough,
can be safely solved by means of an ILP solver. Once the relaxed problem is
solved we provide also an efficient algorithm which {\sl converts} the relaxed
solution into a solution for the original problem.
To appreciate the combinatorial beauty of our solutions,
we recommend to visit the gallery of animations provided at \url{http://solitairearmy.isnphard.com}.

Interestingly enough, an analogous relaxation was considered in
\cite{chung_pebbling_1995} for a similar problem where a different set of moves
is used (the so-called \emph{pebbling} game). The authors claim the equivalence
between the relaxed and the original problem (see Lemma 3 in
\cite{chung_pebbling_1995}) but the proof is omitted as an `easy induction
argument'. Apparently, E. W. Dijkstra disagrees with such statement: ``\emph{it
    would have been nice if `the easy induction argument' had been shown: a few
colleagues and I spent $1\frac 12$ hours on not finding it}''
\cite{dijkstra_only_1995}. Since our equivalence result holds for a large class
of moves which includes both pegs and pebbles, we also obtain their result as a
corollary.

\textbf{Other related results.} Our hardness reduction contributes to the line
of research investigating the computational complexity of combinatorial games
\cite{hearn_games_2009,
%cormode_hardness_2004, forisek_computational_2010, viglietta_gaming_2014,
aloupis_classic_2015, guala_bejeweled_2014}. As for positive results regarding
%the computational tractability of
peg solitaire, it has been shown that, on a rectangular board of fixed height,
the set of initial configurations that can be reduced to a single peg form a
regular language \cite{goos_peg-solitaire_1997, moore_one-dimensional_2000}.
%Recently, the game of peg solitaire has been also generalized to general graphs
%\cite{beeler_peg_2011}.
Finally, the idea of formulating peg solitaire puzzles as integer linear
programs
%in order to solve them by computer
was already suggested in \cite{beasley_ins_1985} and effectively used in
\cite{kiyomi_integer_2000}.

\section{Our Hardness Reduction}
\label{sec:hardness}

\subsection{Overview of the Reduction}
Here we consider the problem \pegr: given an initial configuration of pegs on a
finite board, it asks to decide whether there exists a sequence of moves that
cause a peg to be placed in a given target position. We prove that \pegr is
\np-complete.

Our reduction is from the \emph{planar circuit satisfiability problem} (\pcsat
for short): in \pcsat we are given a \emph{boolean network} represented as a
planar directed acyclic graph $G$ having a single \emph{sink} vertex $t \in
V(G)$ (i.e., a vertex having out-degree $0$).
Each other vertex in $V(G)$ is either an \emph{input} vertex or a \emph{NAND} vertex.
The vertex $t$ is required to have in-degree $1$, input vertices must be
sources in $G$ (i.e. their in-degree must be $0$), while NAND vertices must
have in-degree exactly $2$ and out-degree at least $1$.
The problem consists in determining whether it is possible to assign a truth
value $\pi(u) \in \{\vtrue, \vfalse \}$ to each vertex $u \in G$ in order to
satisfy the following properties:
\begin{itemize}
    \item the truth value assigned to a NAND vertex $u$ is the NAND of the
        truth values of its two in-neighbors, i.e., if $v_1$ and $v_2$ are the
        in-neighbors of $u$, we have $\pi(u) = \neg ( \pi(v_1) \wedge
        \pi(v_2))$;
    \item the value assigned to the sink vertex $t$ coincides with the truth
        value of its only in-neighbor;
	\item $\pi(t) = \vtrue$.
\end{itemize}

Notice that the assignment $\pi(\cdot)$ is completely determined by the truth
values of the input vertices. It is well known that this problem is
\np-hard.\footnote{See, for example, the notes of lecture 6 of the course ``Algorithmic Lower Bounds: Fun with Hardness Proofs'' by Prof. Erik Demaine (\url{http://courses.csail.mit.edu/6.890/fall14/lectures/}). } Clearly, the graph can be thought as a \emph{boolean circuit}
consisting of \emph{links} and \emph{gates} which computes a boolean output as
a function of the circuit's inputs $x_1, x_2, \dots$.

\begin{figure}
	\centering
	\includegraphics[scale=0.8]{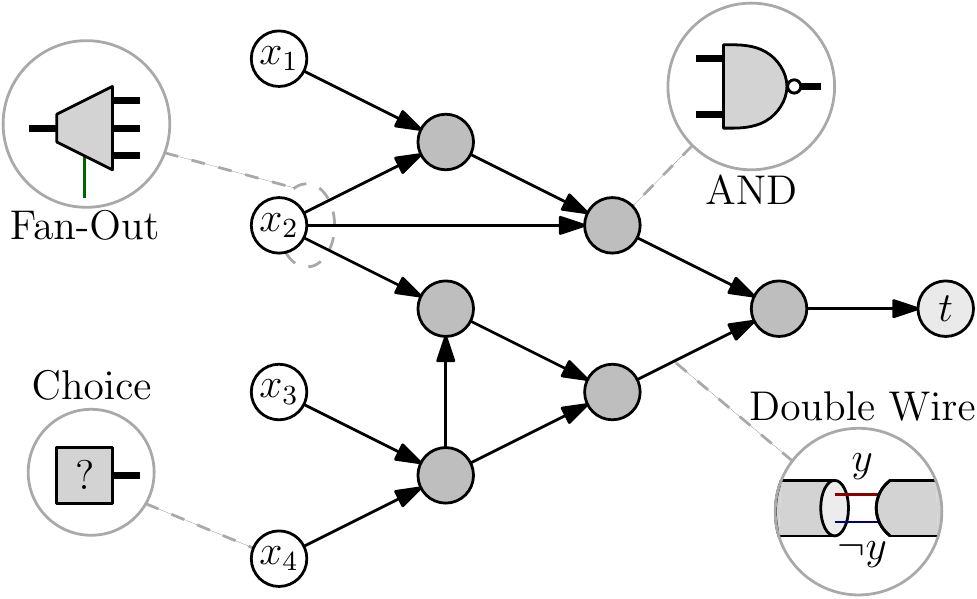}
	\caption{An instance of \pcsat corresponding to the formula $( \neg x_1 \vee \neg x_2) \wedge (x_2 \vee \neg x_3 \vee \neg x4)$. Gadgets that we need to implement are highlighted.}
	\label{fig:overview}
\end{figure}

Given any instance of \pcsat, we will build an instance of \pegr which
simulates the behavior of such a circuit. We encode the circuit's mechanics
using \emph{dual-rail logic}: each edge of the network will be transformed
into a \emph{dual-rail wire} consisting of two \emph{single wires} which will
always ``carry'' opposite boolean values. In order to perform such a
transformation we will make use of some gadgets:  the \emph{choice gadget} will
be used to encode input vertices, the \emph{NAND gadget} will represent a NAND
vertex, and the \emph{fan-out gadget} will allow to split a single double-wire
into multiple double-wires to be fed as input into other gadgets.
For technical reasons, the fan-out gadget will output an additional boolean
\emph{control signal} on a dedicated single wire, which we will call
\emph{control wire}. Intuitively, if we require this control signal to be
\vtrue, then the correct operation of the fan-out gadget is guaranteed.
All these control wires can be safely brought outside of the area of the board
that contains the gadgets and all the other wires. Once this has been done, we
can finally AND together the boolean signals carried by all these
control lines along with the signal carried by the double wire corresponding to
the unique edge entering the output vertex $t$.
Since $G$ is planar we will have no intersections between double-wires.
However, the same does not hold for control wires. Whenever a control wire
intersects a double wire we will use a suitable additional gadget which we
call \emph{control-crossover} that ensure that, whenever the control signal is
\vtrue, the signal carried by the double wire will not be affected.

Figure~\ref{fig:overview} shows a possible instance of \pcsat and highlights the gadgets we need to implement, while Figure~\ref{fig:overview2} is a high level picture of the associated instance of \pegr.
Clearly, the circuit shown in Figure~\ref{fig:overview2} will be implemented as a certain configuration of pegs on a board. The output of the last AND gate will correspond to the target position of \pegr: a peg can be brought to the target position iff there is a truth assignment of the inputs that causes the circuit to output \vtrue.

\begin{figure}
	\centering
	\includegraphics[width=0.75\textwidth]{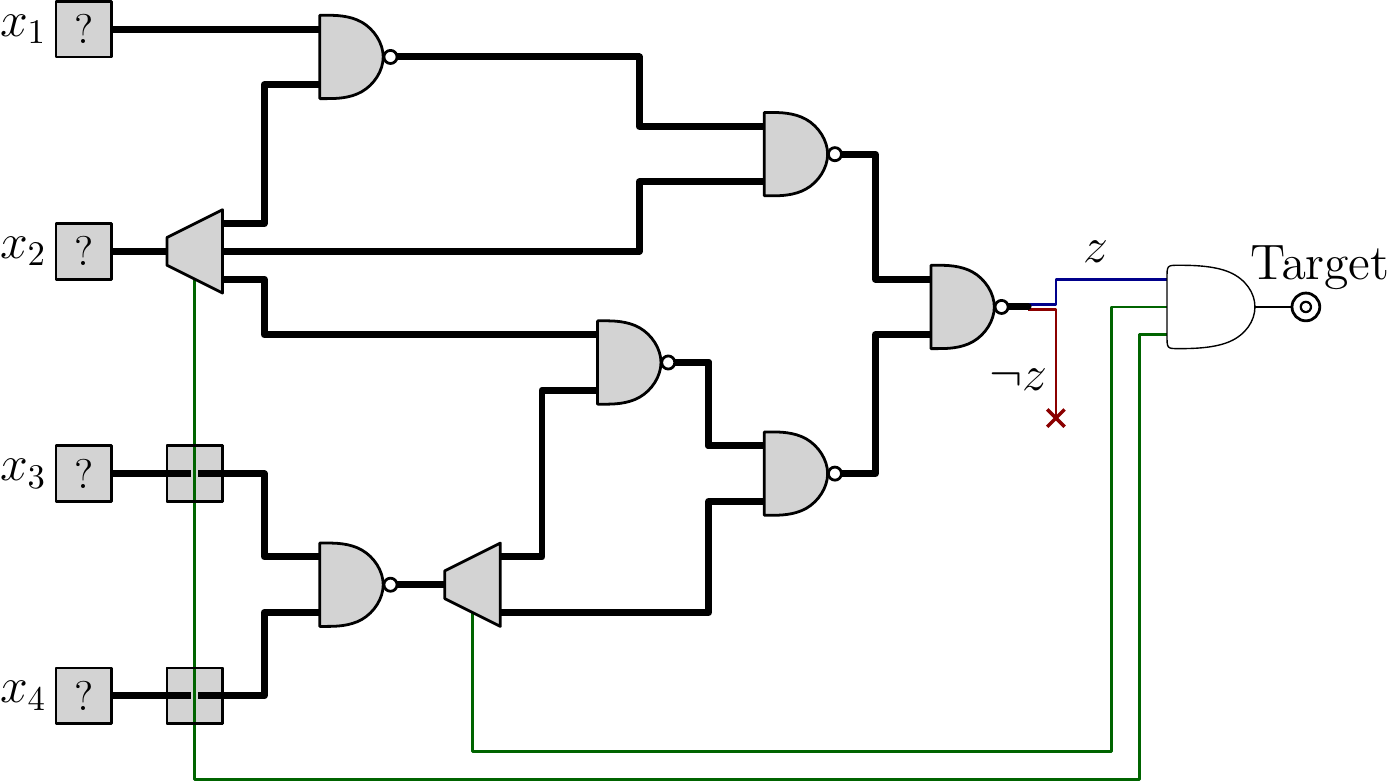}
	\caption{A high-level picture of the instance of \pegr corresponding to the instance of \pcsat shown in  Figure~\ref{fig:overview}. Double wires are bold while single wires are thin. Control wires (in green) are ANDed together with the single wire carrying the positive signal $z$ of the circuit output. The negated signal $\neg z$ is ignored.}
	\label{fig:overview2}
\end{figure}
	
\subsection{Description of the Gadgets}		

In this section we describe the gadgets used by our reduction. We will need to design gates for both binary and dual-rail logic. We will use the following color convention: the former will be colored white, while the latter will be in gray.

\subsubsection{Binary Logic Gadgets}

We start by describing the gadgets of our construction that deal with binary
logic. We have single wires and binary logic gates. Intuitively a single wire
``carries'' a boolean signal from its first endpoint to the other. The truth
value of the signal is encoded as follows: if there is a peg on the first
endpoint then the signal is \vtrue, and the wire will allow the peg to be moved
to the other endpoint via a sequence of moves. On the converse, if there is no
peg on the first endpoint then no move can be made and therefore no peg can be
placed on the second endpoint, which encodes a \vfalse signal. The
implementation of a single wire is straightforward and it is shown in
Figure~\ref{fig:bl-wire-half-crossover}. In order to deal with parity issues we
can use the Shift gadget (also shown in
Figure~\ref{fig:bl-wire-half-crossover}) that allows to shift a wire by one
row/column of the board. This gadget can also be used to force the signal to
flow in one direction.

As far as gates are concerned, each gate takes one or more boolean inputs
(which are again encoded by the presence or absence of pegs in certain input
positions), and has one or more boolean outputs which are a function of the
inputs. If an output is \vtrue, this means that there exists a sequence of
moves which places a peg on the output position of the gate. Otherwise, if an
output is \vfalse, no sequence of moves can place a peg on the output position
of the gate.
	
\paragraph{Binary Logic Half-Crossover}

Intuitively, this gadget allows for two single wires to safely cross each other if at least one of them is \vfalse. When both signals are \vtrue they cannot both cross, and one of them will become \vfalse.

More formally, the half-crossover has two inputs $x$ and $y$ and two outputs $x'$ and $y'$. If $x$ and $y$ are not both true, then $x'$ is true iff $x$ is true, and $y'$ is true iff $y$ is true. Otherwise, if both $x$ and $y$ are true, then only one of $x'$ and $y'$ can be set to true.
The gadget is shown in Figure~\ref{fig:bl-wire-half-crossover}.

\begin{figure}
	\centering
	\includegraphics[width=0.9\textwidth]{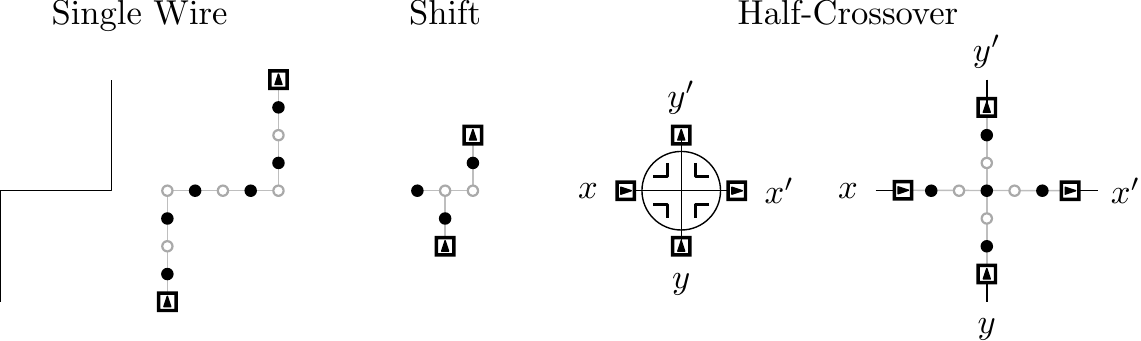}
	\caption{Symbols and implementations of the wire (left), shift (center), and half-crossover (right). In the half-crossover, if the inputs $x$ and $y$ are not both true, then $x'=x$ and $y'=y$. Otherwise either $x'$ or $y'$ will be true (but not both).}
	\label{fig:bl-wire-half-crossover}
\end{figure}

\paragraph{Binary Logic AND and Binary Logic OR}

The binary logic AND and OR gadgets have two inputs and one output and their implementations are shown in Figure~\ref{fig:bl-and-or}. In the AND gadget it is possible to place a peg on the output position only when both pegs are present on the input positions while, in the OR gadget, it is possible to output a peg iff at least one input peg is present. Clearly, multiple copies of both the AND and the OR gadgets can be chained together in order to simulate AND and OR gates with multiple inputs.

\begin{figure}
	\centering
	\includegraphics[scale=1]{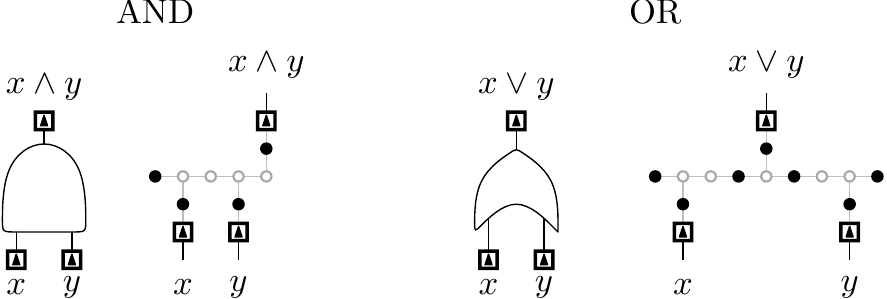}
	\caption{Implementation of the AND and OR binary logic gates.}
	\label{fig:bl-and-or}
\end{figure}

\paragraph{Binary Logic Fan-Out}

This gadget allows to duplicate an input signal $x$. For technical reasons it also has an additional output signal $c$ that we call control signal. Whenever $c$ is $\vtrue$, this gate acts as a classical fan-out: the values of the two outputs $x'$ and $x''$ (see Figure~\ref{fig:bl-AOA-fan-out}) will coincide to the value of $x$. We will always require all the control lines to be \vtrue.
In order to implement this gate we need an additional gadget that we call AXB gate.

The AXB gate takes three inputs $a$, $x$, $b$ (in order) and computes a single output whose value is \vtrue iff $x$ is \vtrue or both $a$ and $b$ are \vtrue. The implementation is given in Figure~\ref{fig:bl-AOA-fan-out}. Notice that if $x$ is \vtrue, then a \vtrue signal can traverse the half-crossover and trigger the OR-gate, which outputs \vtrue. If $x$ is \vfalse, then the only way for the gate to output \vtrue is to have both $a$ and $b$ set to \vtrue. Indeed, in this case, $a$ can traverse the half-crossover and reach the AND gate together with $b$. Both this signals are need to cause the AND and the OR gates to output \vtrue.

Now we argue on the correctness of the fan-out gadget. Assume that the output control signal $c$ is $\vtrue$. This means that either $x$ is \vtrue and hence two pegs can be placed on the output positions of $x'$ and $x''$, or $x$ is \vfalse which means that the other two inputs of the AXB gate must both be true, which implies that no peg can reach $x'$ or $x''$.

\begin{figure}
	\centering
	\includegraphics[width=0.95\textwidth]{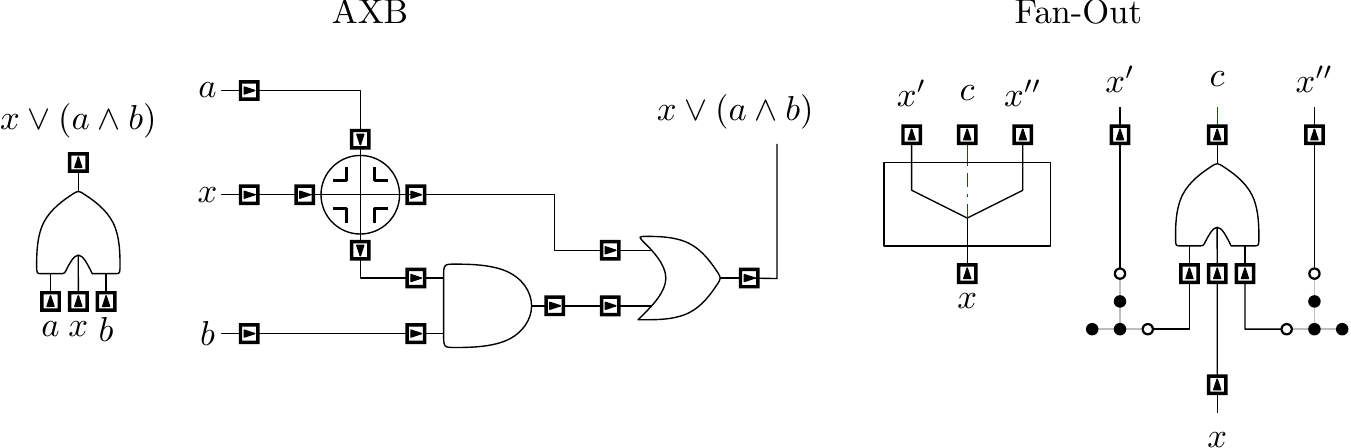}
	\caption{Implementation of the AXB and fan-out binary logic gates.}
	\label{fig:bl-AOA-fan-out}
\end{figure}

\paragraph{Binary Logic Control-Crossover}

This gadget allows a control signal $c$ to safely cross a the signal $x$ carried by single wire. The gadget has two outputs $c'$ and $x'$ and its implementation is shown in Figure~\ref{fig:bl-control-crossover}. If  $c'$ is required to be \vtrue, then $c$ must be also true and $x'$ will be \vtrue iff $x$ is \vtrue. Notice that if $c'=\vfalse$ then $x'$ can be freely set to $\vtrue$ or $\vfalse$. However, we will always require $c'$ to be set to \vtrue.

We now argue on the correctness of the gadget. We only need to consider the case $c'=\vtrue$. It is easy to see that, in order for $c'$ to be \vtrue, $c$ must be \vtrue as well. Indeed if $c=\vfalse$, then at least one of the two inputs of the AND gate must be false. Let us assume $c=\vtrue$. Either $x=\vtrue$ or $x=\vfalse$. In the first case, the gate can output $x'=\vtrue$ and $c'=\vtrue$: we can use the peg from $c$ to allow the peg from $x$ to reach the fan-out gate which can now output three \vtrue signals. Notice that this requires the peg from $x$ to jump over the peg from $c$.
In the other case, i.e. $x=\vfalse$, we have that the gate can output $x'=\vfalse$ and $c'=\vtrue$ by using the peg from $c$ to activate the OR gate, and the control signal from the fan-out gate to activate the end gate. Now, we only need to show that the gate cannot output $x'=\vtrue$ and $c'=\vtrue$. Indeed, as $c'=\vtrue$, the control signal from the fan-out gate must be true as well. The claim follows since, in this case, there is no way for a $\vtrue$ signal to reach the input of the fan-out gate.

\begin{figure}
	\centering
	\includegraphics[scale=1]{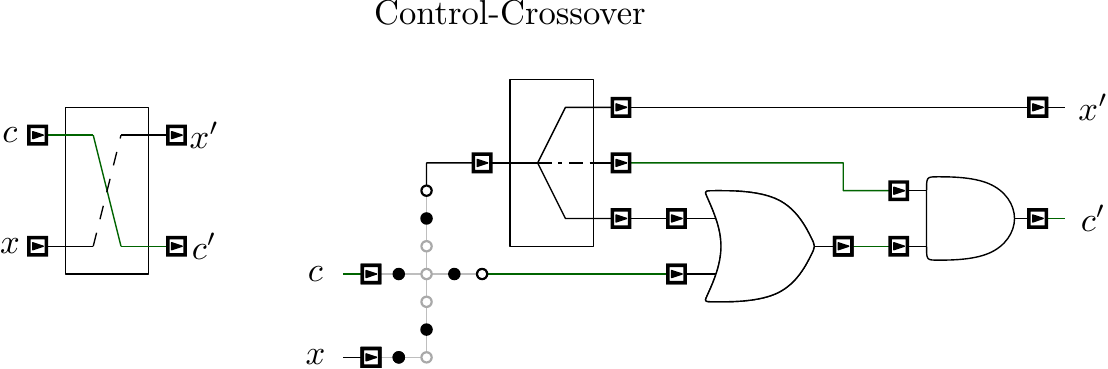}
	\caption{Implementation of the control-crossover gadget. If $c'=\vtrue$, then $c=\vtrue$ and $x'=x$. If $c'=\vtrue$ then $x'$ is unrelated to $x$ and can be either $\vtrue$ or $\vfalse$.}
	\label{fig:bl-control-crossover}
\end{figure}

\subsubsection{Dual-Rail Logic Gadgets}
	
\paragraph{Double Wire and Dual-Rail Logic Choice}

These two gadgets are straightforward and their implementation are shown in Figure~\ref{fig:drl-wire-choice}. Double wires will be drawn in bold to distinguish them from single wires. Moreover we will refer to the pair of signals $x$ and $\neg x$ as $\hat x$.
In the choice gadget, the central peg must jump over either the peg on its top (that corresponds to $x$) or the peg on its bottom (that corresponds to $\neg x$).

\begin{figure}
	\centering
	\includegraphics[scale=1]{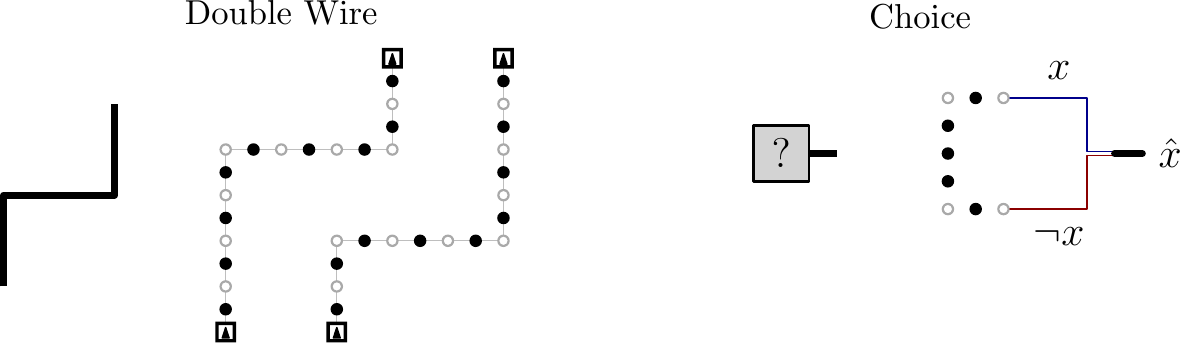}
	\caption{Implementation of a double wire and of the dual-rail logic choice gadget.}
	\label{fig:drl-wire-choice}
\end{figure}

\paragraph{Dual-Rail Logic NAND}

We implement this gate by separately building a dual-rail NOT gate and a a dual-rail AND gate, see Figure~\ref{fig:drl-and}.
To compute $\neg \hat x$ from $\hat x$, we just need to exchange the roles of the single wires corresponding to $x$ and $\neg x$, hence the NOT gate is actually just a half-crossover.
The AND gate computes the logic AND of $\hat x$ and $\hat y$ by separately computing $x \wedge y$ and $\neg (x \wedge y) = \neg x \vee \neg y$. Notice that, every time we use a half-crossover, at least one of its inputs is \vfalse.

\begin{figure}
	\centering
	\includegraphics[scale=1]{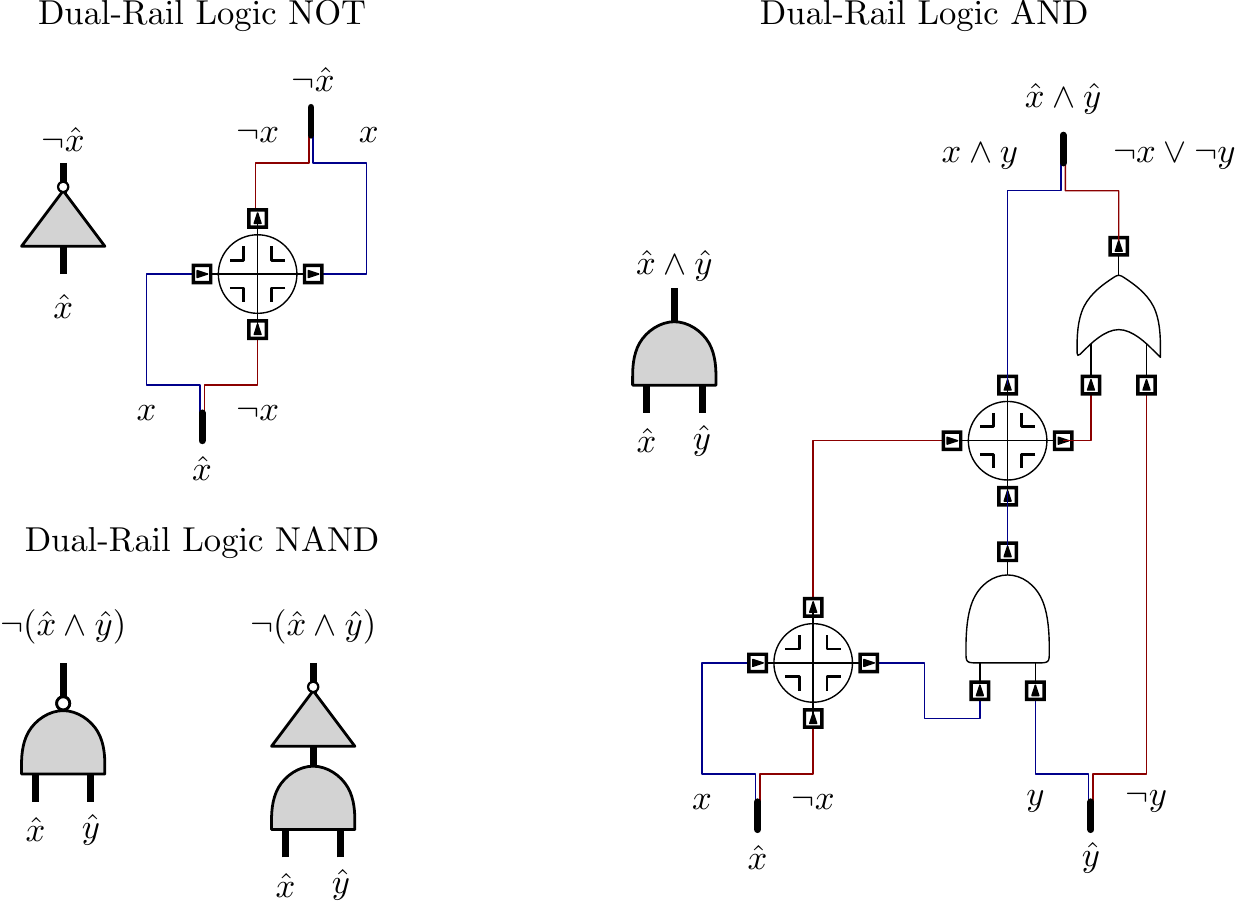}
	\caption{Implementation of the dual-rail logic NOT, AND, and NAND gates.}
	\label{fig:drl-and}
\end{figure}

\paragraph{Dual-Rail Logic Fan-Out}

This gate is shown in Figure~\ref{fig:drl-fan-out} and it is similar to the binary logic fan-out except that it works on double wires. The gadget makes use of two binary logic fan-outs whose control lines are crossed with the single wires using binary logic control crossovers. The additional control output $c$ will be \vtrue iff all the inner control lines are \vtrue as well, hence ensuring the correct operation of the gadget. Notice that we also need to cross a single wire carrying the signal $\neg x$ with another single wire carrying $\neg x$ but this can be safely done by using an half-crossover.

\begin{figure}
	\centering
	\includegraphics[scale=1]{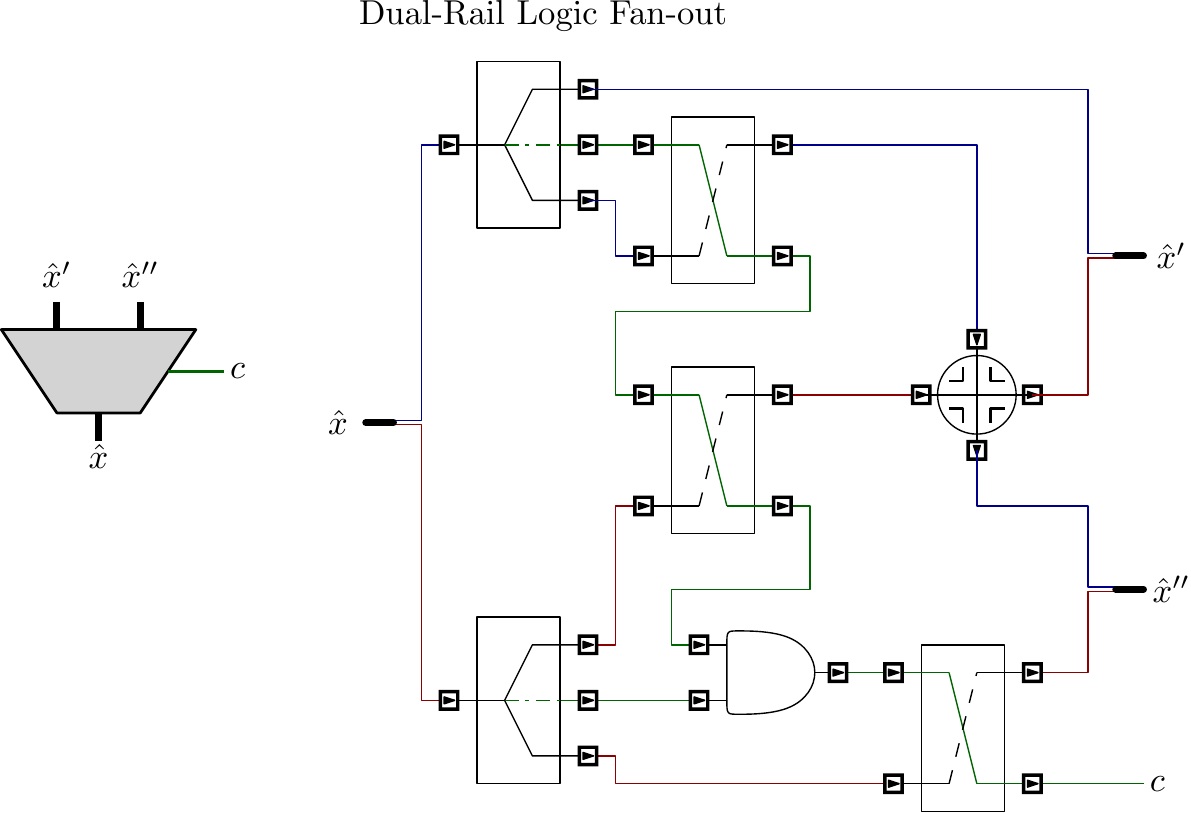}
	\caption{Implementation of the dual-rail logic fan-out. Whenever $c'=\vtrue$, the gadget duplicates $\hat x$ into $\hat x'$ and $\hat x''$.}
	\label{fig:drl-fan-out}
\end{figure}

\paragraph{Dual-Rail Logic Control-Crossover}

This gadget allows for a control wire to safely cross a double wire: if the control output $c'$ is \vtrue, then the control input $c$ is true as well and the output $x'$ is equal to the input $x$ (see Figure~\ref{fig:drl-control-crossover}). Its implementation is straightforward as it suffices to use two binary logic control-crossovers to cross $c$ with the single wires corresponding to $x$ and $\neg x$, respectively.

\begin{figure}
	\centering
	\includegraphics[scale=1]{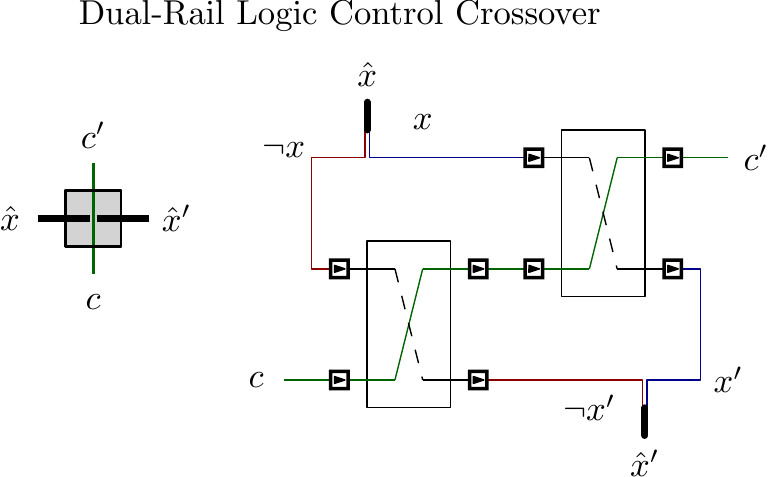}
	\caption{Implementation of the dual-rail logic control-crossover.}
	\label{fig:drl-control-crossover}
\end{figure}

\section{The Solitaire-Army problem}
\label{sec:army}

\subsection{Overview}

In this section we focus on the \pegar problem which, given a target position
$g$ and a region $R$ of an infinite board, requires finding an initial
configuration of pegs inside $R$, and a finite sequence of moves such that a peg is
finally placed in $g$. We will call \emph{desert} the part of the board where
no pegs are allowed in the initial configuration. Moreover, when we consider a
configuration of the board, for any position $p$, we denote by $n(p)$ the
number of pegs placed at $p$ in the configuration\footnote{We remark that in
\pegar it always holds  $n(p)\in \{0,1\}$.}.
\begin{table}
    \centering
    \begin{tabular}{|>{\centering\arraybackslash}m{6.6cm}|>{\centering\arraybackslash}m{2.8cm}|>{\centering\arraybackslash}m{2cm}|}
        \hline
        \textbf{Desert} & \textbf{Goal vertex} & \textbf{Number of moves} \\
        \hline
        Square $7 \times 7$ & Center of the desert & 15\\
        \hline
        Square $9 \times 9$ & Center of the desert & 39\\
        \hline
        Square $11 \times 11$ & Center of the desert & 212\\
        \hline
        Square $12 \times 12$ &
        %\smallskip
        \begin{minipage}[t]{0.2\textwidth}
            {One of the four centers of the desert\smallskip}
        \end{minipage} & 301\\
        \hline
        \begin{minipage}[t]{\var\textwidth}
            \flushleft {Square $11 \times 11$ at the border of the board,
              when the board is a half-plane \smallskip}
        \end{minipage} &
        Center of the desert & 246\\
        \hline
        \begin{minipage}[t]{\var\textwidth}
            \flushleft Square $11 \times 11$, when the board is the union of three
            half-planes tangent to the desert \smallskip
        \end{minipage}
            & Center of the desert & 241\\
        \hline
        \begin{minipage}[t]{\var\textwidth}
            \flushleft Rhombus $15\times 15$ (i.e. with axes of length $15$)
        \end{minipage}
            & Center of the desert & 176\\
        \hline
        \begin{minipage}[t]{\var\textwidth}
            \flushleft Rhombus $15\times 15$ at the border of the board,
            when the board is a diagonal half-plane\smallskip
        \end{minipage}
            & Center of the desert & 202\\
        \hline
        \begin{minipage}[t]{\var\textwidth}
            \flushleft Rhombus $15\times 15$,
            when the board is the union of three half-planes tangent to the desert \smallskip
        \end{minipage}
            & Center of the desert & 183\\
        \hline
    \end{tabular}
    \caption{Solutions to \pegar of particular interest.}
    \label{tab:sap}
\end{table}
When trying to solve an instance of \pegar, it is convenient to consider the
\emph{reversed} version of the game, as if we were rewinding a video of someone
trying to solve the puzzle. More formally, in the reverse \pegar, a move is
defined as follows. Let $\textbf{p} = (p_1, p_2, p_3 )$ be a generic triple of
vertically or horizontally adjacent positions $p_1$, $p_2$ and $p_3$. A move
consists of decreasing by one $n(p_3)$ and increasing by one both $n(p_1)$ and
$n(p_2)$, provided that $n(p_1)=n(p_2)=0$ and $n(p_3)=1$. Clearly, an instance
of \pegar is solvable if and only if, in the reversed version of the game, it
is possible to reach a configuration with no peg in the desert, starting from
an initial configuration with a single peg placed in $g$. In the rest of this
section, we will consider always the reversed version of \pegar and we will
refer to it as simply \pegar.

\begin{figure}
	\centering
	\includegraphics[scale=0.7]{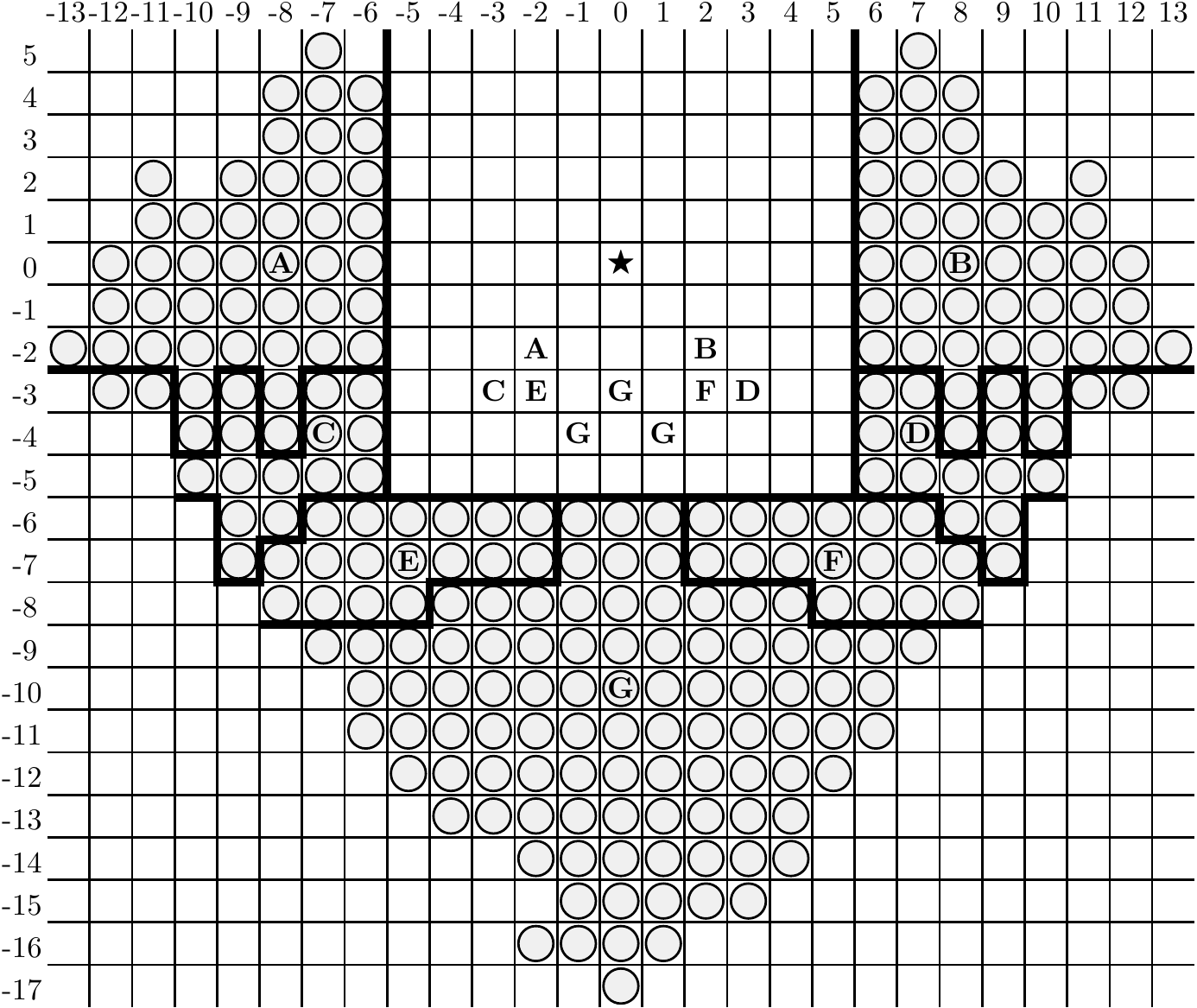}
    \caption{The initial deployment of the \pegar  which sends a peg to the
center of a square $11\times 11$ desert (this problem was left open in \cite{csakany_solitaire_2000}).
The army is divided into seven platoons each denoted by a letter.
The letters inside the square represent the positions reached by the
corresponding platoons. When those positions are taken, it is easy to
conquer the center denoted by a star.
The complete sequence of moves is given in the Section \ref{apx:moves}.}
\label{fig:peg}
\end{figure}

We are now ready to describe a general approach that can be conveniently used
to attack an instance of \pegar. One of the most annoying issue when trying to
solve the problem is the fact that the order of the moves does matter. To
circumvent this difficulty, we simplify the game by defining a \emph{relaxed}
version of \pegar that we call \srsap. In this variant, given a triple
$\textbf{p} = (p_1, p_2, p_3 )$ on the board, a move consists of decreasing
$n(p_3)$ by one and increasing both $n(p_1)$ and $n(p_2)$ by one, with no
constraints on the values of $n({p_1})$, $n({p_2})$ and $n({p_3})$. Therefore
we allow each position to have any integer number of pegs, including negative.
Here the goal is to reach a configuration in which there are zero pegs in the
desert and every position in the region $R$ is occupied by one or zero pegs.

Our main contribution of this section is the following theorem, which shows the
equivalence among the two versions of the game. The
proof is deferred to the next Section.

\begin{theorem}
    \srsap is solvable if and only if \pegar is solvable. Moreover, any solution for \srsap can be transformed in polynomial time into a solution for \pegar (with at most the same number of moves).
    \label{thm:SRT_iff_OP}
\end{theorem}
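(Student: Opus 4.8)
The plan is to prove the two implications separately; essentially all of the content sits in the conversion from \srsap to \pegar. The direction ``\pegar solvable $\Rightarrow$ \srsap solvable'' is immediate: \srsap is obtained from \pegar by dropping the preconditions $n(p_1)=n(p_2)=0$ and $n(p_3)=1$ of a move, so every legal \pegar play is also a legal \srsap play; and since in \pegar one always has $n(\cdot)\in\{0,1\}$, the configuration reached at the end of a \pegar solution has no peg in the desert and at most one peg on each cell of $R$, hence it already is a valid \srsap target. This yields a \srsap solution with the same moves.

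For the converse I would first use the fact that in \srsap the moves commute: performing the triple $\mathbf p=(p_1,p_2,p_3)$ simply adds the fixed vector $\e_{p_1}+\e_{p_2}-\e_{p_3}$ to the configuration, with no precondition whatsoever. Hence a \srsap solution is nothing but a \emph{multiset} $M$ of moves whose total effect carries the single peg on $g$ to some admissible target $C$ (no peg in the desert, at most one peg per cell of $R$). I would encode such an $M$ as a rooted \emph{derivation forest} $T$: the root is the initial peg on $g$; each move of $M$ is an internal node with one incoming edge labelled by its $p_3$ and two outgoing edges labelled by $p_1$ and $p_2$ (the three labels being three consecutive cells of a line, with the $p_2$-label in the middle); and the leaves are labelled so that, for each cell, the number of leaves carrying that label equals the number of pegs $C$ has there. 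Conversely, any such forest is a \srsap solution with $|M|$ moves: scheduling the moves in any order that treats a node only after the move producing its incoming edge (a ``top-down'' order) never drives a count below $0$. Thus \srsap is solvable iff some derivation forest exists, while \pegar is solvable iff some derivation forest is \emph{conflict-free}, meaning that it admits a top-down schedule along which no cell ever holds two pegs.

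The core of the argument is then an algorithm turning an arbitrary derivation forest $T$ into a conflict-free one using no more moves. I would run a greedy top-down schedule, maintaining a partial \pegar play; the only obstruction is that the move $\mathbf p=(p_1,p_2,p_3)$ we want to perform is blocked because, say, $p_1$ is already occupied by a live peg originating from an edge $e$ of $T$ labelled $p_1$ that is incomparable (neither ancestor nor descendant) to the blocked node. When this happens I would repair $T$ locally: either \emph{reroute} --- swap the two subtrees hanging below two incomparable equally-labelled edges, which leaves the forest, the number of moves, and $C$ untouched --- or, when no reroute can unblock the play, \emph{prune} --- delete an entire subtree, which strictly decreases $|M|$ and turns $C$ into a target with strictly fewer pegs. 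Each prune restarts the process; for the reroutes one exhibits a potential (e.g.\ the sorted multiset of depths of the conflicting edge pairs) that strictly decreases, so the whole procedure terminates and, since every surgery is $O(1)$ pointer updates with at most $|M|$ prunes and polynomially many reroutes in between, it runs in polynomial time.

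The step I expect to be the main obstacle is proving that the repair can always be carried out while keeping the target \emph{admissible} --- above all, that a prune never strands a peg in the desert. This is exactly where the defining constraint of the problem (the final configuration must contain no peg inside the desert) becomes indispensable: a leaf created by a prune lies on the cell $p_3$ of the deleted node, which is also the label of the edge that caused the blockage, so one must show that an irreparable blockage --- one that no reroute can clear --- can only concern cells of $R$, never of the desert, exploiting the fact that any peg which ever sits in the desert is, inside $T$, an ancestor of leaves lying outside the desert (otherwise $C$ would retain a peg in the desert). Making this case analysis tight --- i.e.\ proving that whenever the play cannot be unblocked by rerouting there is a prune that both unblocks it and preserves admissibility --- is the crux of the proof; the rest is bookkeeping.
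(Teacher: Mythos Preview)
Your route is genuinely different from the paper's, but the plan has a gap at a point you did not flag. You assert that a \srsap multiset $M$ can always be organised into a rooted derivation tree, yet the ``parent'' matching (assigning to each move a source for its $p_3$-peg) can contain directed cycles. For example, the four reversed moves with $(p_3,p_2,p_1)$ equal to $((0,0),(0,1),(0,2))$, $((0,2),(1,2),(2,2))$, $((2,2),(2,1),(2,0))$, $((2,0),(1,0),(0,0))$ form a $4$-cycle in which each move's $p_3$ is fed by the previous move's $p_1$; their net effect is $+1$ at each of $(0,1),(1,2),(2,1),(1,0)$, so appending them to any \srsap solution still yields an admissible target $C\le\uno$. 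Such a cycle cannot be scheduled top-down at all, and your reroute/prune surgery --- which presupposes a tree and acts on subtrees --- does not apply to it. Deleting the cycle outright is not automatically safe either: the new target loses one peg at each ``spare output'' cell, and if further moves of $M$ consume from those cells the result may go negative. This is precisely the point that has to be controlled, and your termination potential (depths of conflicting edges) again presupposes the tree you have not yet built.

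The paper avoids the forest picture and argues algebraically. Its key lemma is: if $\y$ is \emph{maximal} subject to $\zero\le\y\le\x$ and $A\y\ge\zero$, then $\x':=\x-\y$ is again a \srsap solution; maximality is exactly what forces $A\x'+\b\ge\zero$, while $A\x'+\b\le\c$ is immediate from $A\y\ge\zero$. Observe that this makes the desert constraint (the zero entries of $\c$) automatic whenever one discards a move-set with nonnegative net effect --- so the obstacle you singled out is in fact the easy half; the delicate half is the lower bound $\ge\zero$, handled by maximality. With the lemma in hand, the paper runs a greedy scheduler and shows: (i) some remaining move keeps the configuration $\ge\zero$ (otherwise one exhibits a nonzero $\y$ with $A\y\ge\zero$, extends it to a maximal one, and recurses on the strictly smaller $\x'$); (ii) among those, some move also keeps it $\le\uno$ (otherwise one chases a chain of blocking moves until it repeats, and the resulting cycle again yields such a $\y$). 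The induction is on $|\x|$, and the procedure is manifestly polynomial. If you want to salvage your combinatorial picture, the cycle-extraction in step~(ii) is the analogue of your ``prune'', but carried out on the move multiset rather than on a tree, and the maximal-extension lemma is what replaces your unproven claim that pruning preserves admissibility.
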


We emphasize that Theorem \ref{thm:SRT_iff_OP} actually holds for a wider class of games
(including the \emph{pebbling} game \cite{chung_pebbling_1995}).
More precisely, we can define a generalized (reversed) \pegar problem in which the set of moves is specified by a collection of tuples $\textbf{p}^{(i)}=(p^{(i)}_{1}, \dots, p^{(i)}_{\ell_i} )$
of vertices of the board. In this game a move consists
of removing a peg from the vertex $p^{(i)}_{\ell_i}$
and of adding a peg in the remaining vertices of the tuple, provided that $n({p^{(i)}_{\ell_i}}) = 1$ and
$n({p^{(i)}_{1}}) = \dots = n({p^{(i)}_{\ell_i-1}}) = 0$.

Since it turns out that
\srsap admits a compact integer linear programming formulation (ILP), then several \pegar puzzles can be solved by using a good ILP solver\footnote{We used Gurobi Optimizer, that Gurobi Optimization Inc. has gently made freely available to academic users for research purposes.}. Table \ref{tab:sap} summarizes some of the results that we obtained by using our general approach. We also provide a gallery of \emph{animated} solutions at \url{http://solitairearmy.isnphard.com}.
As an example, in Figure
\ref{fig:peg}, we illustrate a final configuration for the \pegar in which the initial target position is the center of a $11\times 11$ square-shaped desert.

\subsection{Proof of Theorem \ref{thm:SRT_iff_OP}}
\label{sec:proof_of_thm}

Recall that we are considering the reversed version of \pegar.
For any finite board, let $n$ be the number of positions.
Let $\zero$ be the $n\times 1$ vector whose entries are
all $0$s, and $\uno$ be the $n\times 1$ vector whose entries are
all $1$s. Let $\b$ be a $n\times 1$ integer vector with
$\zero\leq \b\leq \uno$ which represents the initial configuration of pegs.
Let $\c$ be a $n\times 1$  integer vector with $\zero\leq \c\leq \uno$ which
represents the constraints on the final configuration. In particular the zero entries of $\c$
specify the region that at the end should be cleared of pegs (the desert).

An $n\times m$ matrix $A=(a_{ij})$ defines the set of moves on the board: for
$j=1,\dots,m$, the $j$th move adds $a_{ij}\in\{0,1,-1\}$ pegs to the $i$th
position. The \emph{fundamental assumption} on this matrix is that each column
of $A$, a move, has at most a $-1$ entry. No further assumption on the
structure of matrix $A$ is made.

According of this notation, we state the relaxed problem (\srsap):
find a $m\times 1$ integer
vector $\x\geq \zero$ such that
\[\zero\leq A\x+\b\leq \c.\]
Note that the component $(A\x+\b)_i$ gives the number of pegs at the $i$th position after we applied
$(\x)_j$ times the $j$th move for $j=1,\dots,m$ to the initial configuration
$\b$.

On the other hand, the original problem \pegar  can be rephrased as follows. Find $m\times 1$ unit vectors
$\u_1,\u_2,\dots, \u_N$  (a {\sl unit vector} is a vector which has an entry $1$ and $0$s elsewhere) such that
\[
    \zero\leq A\x_k+\b\leq \uno\quad \mbox{for } 1\leq k\leq N-1 \quad
        \mbox{and}\quad \zero\leq A\x_N+\b\leq \c.
\]
where $\x_k:=\sum_{t=1}^k \u_t$. Hence if a solution $\x$ of \srsap is available, we need to
find a decomposition of $\x$ into an ordered sum of unit vectors which represents a specific sequence of feasible  moves for \pegar.

\smallskip

For a generic vector $\x$, we denote by $|\x|$ the sum of the entries of $\x$.
The following preliminary lemma describes a crucial property we will need later.

\begin{lemma} Let $\x$ be a solution of \srsap and let $\y$ be a $m\times 1$ integer vector such that
\[
    \zero\leq \y \leq \x\quad \mbox{and}\quad A\y\geq \zero.
\]
If $\y$ is maximal, i. e. there is no unit vector $\u\leq \x-\y$ such that $\y':=\y+\u$ satisfies the above inequalities, then  $\x':=\x-\y$ is another solution of \srsap.
\label{lem:other_sol}
\end{lemma}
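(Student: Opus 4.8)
The goal is to show that if $\x$ solves \srsap, i.e. $\zero \le A\x + \b \le \c$, and $\y$ is a maximal integer vector with $\zero \le \y \le \x$ and $A\y \ge \zero$, then $\x' = \x - \y$ again satisfies $\zero \le A\x' + \b \le \c$. Writing $A\x' + \b = (A\x + \b) - A\y$, the upper bound $A\x' + \b \le \c$ is immediate from $A\x + \b \le \c$ and $A\y \ge \zero$. So the whole content of the lemma is the lower bound $A\x' + \b \ge \zero$, i.e. that subtracting the "clean prefix of moves" $\y$ from the configuration $A\x+\b$ never drives any coordinate negative. I would record the trivial direction first and then concentrate entirely on this.

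The plan for the lower bound is a proof by contradiction exploiting maximality. Suppose some coordinate $i$ has $(A\x' + \b)_i = (A\x+\b)_i - (A\y)_i < 0$, i.e. $(A\y)_i > (A\x+\b)_i \ge 0$. Then in particular $(A\y)_i \ge 1$, so position $i$ has a net positive inflow under $\y$; since each column of $A$ has at most one $-1$ entry (the fundamental assumption), having $(A\y)_i>0$ together with $(A\x)_i = (A\x+\b)_i - \b_i$ being small forces the existence of some move $j$ with $a_{ij} = +1$ and $(\x)_j - (\y)_j \ge 1$, i.e. a unit vector $\u = \e_j \le \x - \y$ that has not yet been "used" in $\y$. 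The candidate extension is $\y' = \y + \u$. By construction $\zero \le \y' \le \x$ and $|\y'| = |\y| + 1$, so to contradict maximality it remains only to check $A\y' = A\y + A\e_j \ge \zero$. The coordinates where the $j$th column is $0$ are untouched; the coordinate $i$ where it is $+1$ only increases; so the sole danger is the unique coordinate $i'$ (if any) where $a_{i'j} = -1$.

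The crux, therefore, is showing $(A\y)_{i'} \ge 1$ so that decrementing it keeps it nonnegative. This is where I expect the real work to be, and it is the natural place to make the move-index $j$ more carefully chosen rather than arbitrary. The idea: among all moves $j$ with $a_{ij} = +1$ and $(\x - \y)_j \ge 1$, such a $j$ must exist because otherwise every move contributing positively to coordinate $i$ is already saturated in $\y$, which (again using that each column has a single $-1$) would force $(A\y)_i \ge (A\x)_i \ge (A\x+\b)_i$ by a coordinatewise accounting of $A\x$ versus $A\y$ restricted to the columns positive at $i$ — contradicting $(A\y)_i > (A\x+\b)_i \ge 0$ only if we are careful, so I would instead argue directly: pick $j$ with $a_{ij}=1$, $(\x-\y)_j\ge 1$; if its negative coordinate $i'$ has $(A\y)_{i'} = 0$ then $(A\x+\b)_{i'} \ge 0$ forces $(\x)_j$-worth of $-1$'s at $i'$ to be compensated, and one shows some other move $j'$ positive at $i'$ is itself not saturated, letting us instead add $\e_{j'}$ — i.e. run a "chase along the dependency" and use finiteness of $|\x - \y|$ to terminate. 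Concretely I would set up a minimal counterexample (minimize $|\x-\y|$, or do induction on $|\x|$) so that this chase bottoms out cleanly: the maximal $\y$ is exactly the obstruction that makes the chase impossible, delivering the contradiction. The termination/finiteness bookkeeping in this dependency-chase is the main obstacle; everything else is routine manipulation of the inequalities $\zero \le A\x+\b \le \c$ together with the single-$(-1)$-per-column hypothesis.
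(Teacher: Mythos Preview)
Your overall architecture is right: the upper bound is free, and for the lower bound you aim to contradict maximality of $\y$ by exhibiting a unit vector $\u \le \x - \y$ with $A(\y+\u)\ge\zero$. But you choose the wrong unit vector, and this is what forces you into the unresolved ``chase along the dependency''.

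You look for a column $j$ with $a_{ij}=+1$ and $(\x-\y)_j\ge 1$. First, the existence of such a $j$ does not follow from what you have: knowing $(A\y)_i\ge 1$ only gives a column positive at $i$ with $(\y)_j\ge 1$, not with $(\x-\y)_j\ge 1$. Second, even granting such a $j$, the dangerous coordinate is the unique $i'$ with $a_{i'j}=-1$, about which you have no information; this is exactly why your argument degenerates into a chase whose termination you cannot pin down.

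The paper's proof avoids all of this by picking $\u$ in the opposite direction. From $(A\x'+\b)_i\le -1$ and $\b\ge\zero$ one gets $(A\x')_i\le -1$; since $\x'\ge\zero$, some column $j$ must have $a_{ij}=-1$ and $(\x')_j\ge 1$, so take $\u=\e_j\le \x-\y$. Now the \emph{only} coordinate of $A\u$ that is negative is $i$ itself (fundamental assumption), so for $l\ne i$ one has $(A(\y+\u))_l\ge (A\y)_l\ge 0$, and at $i$,
\[
(A(\y+\u))_i=(A\x+\b)_i-(A\x'+\b)_i+(A\u)_i\ge 0+1-1=0.
\]
This single choice already contradicts maximality; no chase, no induction on $|\x-\y|$, no minimal counterexample is needed. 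In short: search for a column whose $-1$ sits at the bad coordinate, not whose $+1$ does.
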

\begin{proof} We have to show that $\zero\leq A\x' + \b\leq \c$. The inequality on the right is trivial:
$$A\x' + \b \leq (A\x' + \b) +A\y=A\x+\b\leq \c.$$
If the remaining inequality $A\x' + \b\geq \zero$ does not hold then there is some $i\in \left\{ 1, \dots, n \right\}$ such that $(A\x'+\b)_i\leq -1$ which implies
$$(A\x')_i \leq -1-(\b)_i\leq -1.$$
Hence there exists a unit vector $\u\leq\x'$ such that $(A\u)_i=-1$.
We claim that $A(\y+\u)\geq \zero$, which contradicts the maximality of $\y$.
As regards the $i$th entry,
$$(A(\y+\u))_i=(A\x+\b)_i-(A\x'+\b)_i+(A\u)_i\geq 0+1-1=0.$$
For $l\not=i$, due to the fundamental assumption on the matrix $A$, we have $(A\u)_l\geq 0$ and the proof is complete.
\end{proof}

In order to prove Theorem \ref{thm:SRT_iff_OP},
it suffices to show that if \srsap is solvable then \pegar is solvable as well. The proof is by induction on the the number of moves $|\x|$ of  the solution of \srsap $\x$. If $|\x|=1$ then we set $\u_1:=\x$ and we are done.
Assume that $|\x|>1$ and any $\x'$ solution of \srsap with $|\x'|<|\x|$ can be transformed into a solution of \pegar.

Let $\x_0=\zero$. We need to show that if we are at the step $k$ with $0\leq k<|\x|$ and
$\zero\leq A\x_k+\b\leq \uno$ then we are able to move forward to step $k+1$:
there exists a unit vector $\u\leq \x-\x_k$ such that
$\zero\leq A\x_{k+1}+\b\leq \uno$
where $\x_{k+1}:=\x_k+\u$.

We divide the proof of this fact into two claims.

\begin{claim}
There is a unit vector $\u$ such that $\u\leq \x-\x_k$ and  $A(\x_{k}+\u)+\b\geq \zero$.
\label{claim:one}
\end{claim}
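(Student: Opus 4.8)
The plan is to argue by contradiction. Suppose that no unit vector $\u\le\x-\x_k$ satisfies $A(\x_k+\u)+\b\ge\zero$; I will show that this forces $A\x_k+\b\le\c$, i.e.\ that the configuration reached after the first $k$ moves already fulfils the final constraint. If one is working with an \srsap solution $\x$ of minimum length this contradicts $k<|\x|$, so the claim holds; alternatively, it simply means that the partial sequence $\u_1,\dots,\u_k$ is already a valid solution of \pegar, so the conversion is finished (and indeed the step cannot be carried out). Throughout I write $\c':=A\x_k+\b$ for the current configuration (which lies in $[\zero,\uno]$ by the hypothesis of this step), $\mathbf f:=A\x+\b$ for the final configuration of the relaxed solution (which lies in $[\zero,\c]\subseteq[\zero,\uno]$), and $\z:=\x-\x_k$ for the vector of outstanding moves. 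Note that $\z\ge\zero$, since every previously chosen move satisfied $\u_t\le\x-\x_{t-1}$ and hence $\x_t\le\x$ for all $t\le k$; moreover $A\z=\mathbf f-\c'$.

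Two structural observations come first, each an immediate consequence of the contradiction hypothesis together with the \emph{fundamental assumption} that every column of $A$ has at most one $-1$ entry. Call a move $j$ \emph{outstanding} if $\z_j\ge1$. First, every outstanding move $j$ contains a $-1$ entry: otherwise $A\e_j\ge\zero$, hence $A(\x_k+\e_j)+\b\ge A\x_k+\b\ge\zero$, and since $\e_j\le\z$ this would be an admissible choice of $\u$. Second, the unique $-1$ of each outstanding move $j$, say in row $i(j)$, lies on a currently empty position, $\c'_{i(j)}=0$: the vector $A\e_j+\c'$ has all entries $\ge-1$, and is negative exactly at positions that are empty in $\c'$ and carry the $-1$ of column $j$, so if $\c'_{i(j)}=1$ the vector would be $\ge\zero$ and $\e_j$ would again be admissible.

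The heart of the argument is a flow-count over the occupied positions. Fix a position $i$ with $\c'_i=1$. The net number of pegs that the outstanding moves bring to $i$ is
\[
(A\z)_i=\sum_{j:\,a_{ij}=1}\z_j-\sum_{j:\,a_{ij}=-1}\z_j ,
\]
and by the second observation no outstanding move has its $-1$ on position $i$ (since $\c'_i=1\neq0$), so the subtracted sum vanishes and $(A\z)_i\ge0$. Hence $\mathbf f_i=\c'_i+(A\z)_i\ge1$, and as $\mathbf f_i\le1$ we get $\mathbf f_i=1=\c'_i$. Together with the trivial bound $\c'_i=0\le\mathbf f_i$ at the empty positions this gives $\c'\le\mathbf f$, so
\[
\zero\le\c'\le\mathbf f=A\x+\b\le\c ,
\]
which says precisely that $\x_k$ solves \srsap, as announced.

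The only genuinely delicate point is this flow-count: one must notice that the contradiction hypothesis is strong enough to exclude \emph{any} outstanding move from discharging an occupied cell, and then squeeze out the sharp inequality $\c'\le\mathbf f$ rather than the already-known $\c'\le\uno$. Everything else is routine bookkeeping with the fundamental assumption. A minor point of care is the bound $\z\ge\zero$, which relies on the inductive manner in which the prefix $\x_1,\dots,\x_k$ was constructed, each new move being bounded by what is left of $\x$.
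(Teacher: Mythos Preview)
Your argument is correct but follows a genuinely different route from the paper's. The paper splits into two cases according to the sign of $A\z$, where $\z:=\x-\x_k$. If $A\z\not\ge\zero$, it picks an index $i$ with $(A\z)_i\le-1$, observes that this forces $(A\x_k+\b)_i\ge1$, and then any outstanding move with its $-1$ in row $i$ is an admissible $\u$; this is essentially the contrapositive of your second structural observation. If instead $A\z\ge\zero$, the paper does \emph{not} argue that $\x_k$ already meets the final constraint; rather it invokes the auxiliary lemma (extend $\z$ to a maximal $\y\le\x$ with $A\y\ge\zero$, so that $\x-\y$ is a strictly smaller \srsap solution) and appeals to the outer induction on $|\x|$.

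Your contradiction argument is more direct for this claim: it bypasses the auxiliary lemma entirely, and your escape clause (``the prefix $\u_1,\dots,\u_k$ is already a valid \pegar solution'') does not even need the induction hypothesis, since that sequence has been explicitly built and verified step by step. The paper's route has the virtue of uniformity --- the same lemma and the same fallback-to-induction mechanism reappear in the proof of the second claim --- but for this claim alone your flow-count is shorter and more elementary. Note, finally, that neither proof literally establishes the claim as stated: both are of the form ``either the required $\u$ exists, or the overall theorem is already settled,'' so your caveat about minimality versus early termination is in the same spirit as the paper's own treatment.
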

\begin{proof}
    Let $\z:=\x-\x_k$, then $\z \geq 0$ and $\z\neq\zero$.

\noindent If $A\z\geq \zero$ then
    we extend $\z$ to a maximal vector $\y$ such that
    $\z\leq \y \leq \x$ and $A\y\geq \zero$.
    By Lemma \ref{lem:other_sol}, $\x':=\x-\y$ is another solution
    for \srsap with $|\x'|<|\x|$ and, by the induction hypothesis, it can be transformed into a solution of \pegar.

\noindent On the other hand, if $A\z\geq \zero$ does not hold,
    then there is some $i\in \left\{ 1,\dots, n \right\}$ such that
    $(A\z)_i\leq -1$ and
    there is a unit vector $\u\leq\z$ such that $(A\u)_i=-1$.
    If $l\not=i$ it follows that
    $$(A(\x_{k}+\u)+\b)_l=(A\x_{k}+\b)_l+(A\u)_l\geq (A\x_{k}+\b)_l\geq 0.$$
    As regards the $i$th entry, $A\x+\b\geq \zero$ implies
    $$(A\x_k+\b)_i=(A\x+\b)_i-(A\z)_i\geq 0+1=1.$$
    and therefore
    $$(A(\x_{k}+\u)+\b)_i=(A\x_{k}+\b)_i+(A\u)_i=
    (A\x_{k}+\b)_i-1\geq 1-1=0.$$
    Hence $A(\x_{k}+\u)+\b\geq \zero$.
\end{proof}

\begin{claim}
There is a unit vector $\u$ such that $\u\leq \x-\x_k$ and $\zero\leq A(\x_{k}+\u)+\b\leq \uno$.
    \label{claim:two}
\end{claim}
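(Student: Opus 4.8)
The plan is to build on Claim~\ref{claim:one}, which already hands us a unit vector $\u \le \x - \x_k$ with $A(\x_k + \u) + \b \ge \zero$. The only thing that can go wrong is the upper bound: some position $i$ might end up with $(A(\x_k+\u)+\b)_i = 2$, i.e. the unique $+1$ entry of the column $\u$ selects lands on a position that already holds a peg in the intermediate configuration $A\x_k + \b$. So first I would separate the easy case: if the $\u$ produced by Claim~\ref{claim:one} already satisfies $A(\x_k+\u)+\b \le \uno$, we are done. Otherwise there is exactly one offending position $i$ with $(A\x_k+\b)_i = 1$ and $(A\u)_i = +1$, and every other entry stays within $\{0,1\}$ because $A(\x_k+\u)+\b \ge \zero$ and the total change from $\x_k$ to $\x_k+\u$ affects at most the $\le 2$ nonzero entries of the column, of which at most one is $-1$.

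The key idea for repairing this is that at position $i$ we currently have a peg, so we should first ``spend'' that peg via some move whose $-1$ entry sits at $i$, and only afterwards apply $\u$ to refill $i$. Concretely, I would look for a unit vector $\w \le \x - \x_k$ whose column has its $-1$ entry at position $i$, apply $\w$ first, and check that $\zero \le A(\x_k + \w) + \b \le \uno$; then $\u$ becomes applicable at the next step since position $i$ is now empty. To see such a $\w$ exists: consider $\z := \x - \x_k$. We have $(A\x+\b)_i \ge 0$ while $(A\x_k+\b)_i = 1$, and applying $\u$ would push it to $2$; since the final value $(A\x+\b)_i \le (\c)_i \le 1$, the remaining moves in $\z$ must have net effect $\le -1$ on coordinate $i$, so some column of $A$ indexed in the support of $\z$ has a $-1$ in row $i$ — that gives the desired $\w \le \z$ with $(A\w)_i = -1$. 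Applying $\w$ first keeps coordinate $i$ at $1 - 1 = 0$, and by the fundamental assumption every other coordinate changes by at most $+1$ (the $-1$ is used up at $i$), so from a configuration with entries in $\{0,1\}$ we could still overflow at the $+1$-coordinate of $\w$ — so this naive swap does not immediately close the argument and needs care.

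The cleaner route, which I expect is the intended one, is to iterate Claim~\ref{claim:one}'s mechanism rather than patch a single step: starting from $\x_k$, repeatedly apply the construction of Claim~\ref{claim:one} to obtain a chain of unit vectors, each preserving the lower bound $\ge \zero$; since each application either terminates via the induction hypothesis (when $A\z \ge \zero$, extracting a maximal $\y$ and recursing on the shorter solution $\x - \y$) or strictly consumes coordinates of $\z$, this process cannot cycle, and along the way the upper bound $\le \uno$ can only be violated transiently at a coordinate that will be decremented by a later move in $\x - \x_k$ — reorder so that that decrementing move is taken first. The main obstacle, and the part I would spend the most effort on, is precisely this reordering: showing that whenever Claim~\ref{claim:one}'s $\u$ causes an overflow at $i$, one can legitimately interleave a move $\w$ with $(A\w)_i = -1$ earlier without creating a new overflow or a new negative entry elsewhere — essentially a local exchange argument controlled by the fundamental assumption that every column has at most one $-1$. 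Once that exchange lemma is in place, Claim~\ref{claim:two} follows, and with it the inductive step and Theorem~\ref{thm:SRT_iff_OP}.
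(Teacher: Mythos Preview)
Your setup is right: Claim~\ref{claim:one} gives a $\u$ in $\U := \{\u \le \x - \x_k : A(\x_k+\u)+\b \ge \zero\}$, the only obstruction is an overflow $(A\x_k+\b)_i = 1$, $(A\u)_i = 1$ at some position $i$, and you correctly locate a unit vector $\v \le \x - \x_k$ with $(A\v)_i = -1$. You also correctly observe that applying $\v$ first may just shift the overflow to another coordinate. But from this point the proposal stalls: the promised ``exchange lemma'' is never stated precisely, let alone proved, and your remark that the iterated process ``cannot cycle'' points in exactly the wrong direction.

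The paper does not reorder moves at all. The observation you are missing is that the $\v$ you found is itself in $\U$: its unique $-1$ entry sits at row $i$ where $(A\x_k+\b)_i = 1$, and every other row can only go up, so the lower bound survives. Hence the rule ``$\u$ overflows at $i \Rightarrow$ pass to some $\v\in\U$ with $(A\v)_i=-1$'' is a map from $\U$ to itself. Since $\U$ is finite, iterating this map \emph{must} cycle: one obtains a minimal cycle $\u_1,\dots,\u_{r-1},\u_r=\u_1$ in $\U$ with the property that whenever $(A\u_t)_j=-1$ one has $(A\u_{t-1})_j=+1$. Summing around the cycle gives a nonzero $\z=\sum_{t=2}^{r}\u_t\le\x$ with $A\z\ge\zero$; extend $\z$ to a maximal $\y$, apply Lemma~\ref{lem:other_sol} to get a strictly shorter \srsap solution $\x'=\x-\y$, and finish via the outer induction hypothesis. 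The cycle is not an obstacle to be avoided but precisely the structure that lets you shrink $|\x|$; no exchange or reordering argument is needed.
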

\begin{proof}
    Let $\U$ be the set of unit vectors $\u\leq\x-\x_k$ such that
    $A(\x_{k}+\u)+\b\geq \zero$. The set $\U$ is not empty by Claim \ref{claim:one}. If Claim \ref{claim:two} holds then the proof is complete. Otherwise
    given any $\u\in \U$ there is some $i\in \left\{
    1,\dots, n \right\}$ such that
    \[
        (A\x_{k}+\b)_i=1 \quad \mbox{and}\quad (A\u)_i=1.
    \]
    On the other hand, $\zero\leq A\x+\b\leq \c$ implies
    $$(A(\x-\x_k-\u))_i=(A\x+\b)_i-(A\x_k+\b)_i-(A\u)_i\leq 1-1-1=-1,$$
    and there is a unit vector $\v\leq \x-\x_k-\u$ such that $(A\v)_i=-1$. Hence
    \[
        (A(\x_k+\v)+\b)_i=(A\x_k+\b)_i+(A\v)_i=1-1=0
    \]
    which means that $\v \in \U$.

    By iterating this process, we generate a sequence of elements of $\U$.
    Since $\U$ is finite, this sequence has a minimal cycle,
    $\u_1,\dots,\u_{r-1},\u_r=\u_1$ and,
    by construction, for $2\leq t\leq r$, $(A\u_{t})_i=-1$ implies that $(A\u_{t-1})_i=1$.
    Setting $\z:=\sum_{t=2}^{r}\u_t$, we have
    $\zero \leq \z\leq \x$, $A\z\geq \zero$ and $\z \neq
    \zero$.
    Now, as in Claim \ref{claim:one}, we extend $\z$ to a maximal vector $\y$.
    Then, by Lemma \ref{lem:other_sol}, $\x':=\x-\y$ is another solution for
    \srsap with $|\x'|<|\x|$ and finally we apply the induction hypothesis.
\end{proof}

Finally, observe that the proof is constructive and can be easily turned into a
polynomial-time algorithm that performs the transformation.

\subsection{Explicit Solution to the Solitaire-Army Problem of Figure \ref{fig:peg}}
\label{apx:moves}

Given a configuration of pegs on a board, we denote a jump with
($x$,$y$,$d$), where $d \in \left\{ \uparrow, \downarrow,
\leftarrow, \rightarrow \right\}$ indicates the direction of the jump that has to
be made by the peg in position $(x,y)$ of the board.
The moves that allows the peg army in Figure \ref{fig:peg} to reach the center are listed below. \smallskip

\noindent \textbf{Platoon A}:
(-7,-2,$\rightarrow$), (-9,-2,$\rightarrow$), (-11,-2,$\rightarrow$), (-13,-2,$\rightarrow$), (-10,-4,$\uparrow$), (-10,-2,$\rightarrow$), (-10,0,$\downarrow$), (-12,0,$\rightarrow$), (-11,2,$\downarrow$), (-12,-1,$\rightarrow$), (-11,-2,$\rightarrow$), (-8,-2,$\rightarrow$), (-8,-4,$\uparrow$), (-9,-2,$\rightarrow$), (-6,-2,$\rightarrow$), (-6,0,$\downarrow$), (-6,2,$\downarrow$), (-6,4,$\downarrow$), (-8,1,$\rightarrow$), (-10,1,$\rightarrow$), (-7,3,$\downarrow$), (-7,5,$\downarrow$), (-9,2,$\rightarrow$), (-8,4,$\downarrow$), (-8,-1,$\rightarrow$), (-10,-1,$\rightarrow$), (-7,1,$\downarrow$), (-7,3,$\downarrow$), (-9,0,$\rightarrow$), (-11,0,$\rightarrow$), (-8,2,$\downarrow$), (-7,-2,$\rightarrow$), (-5,-2,$\rightarrow$), (-7,-1,$\rightarrow$), (-7,1,$\downarrow$), (-9,0,$\rightarrow$), (-8,-1,$\rightarrow$), (-6,-1,$\rightarrow$), (-7,0,$\rightarrow$), (-6,2,$\downarrow$), (-6,0,$\rightarrow$), (-4,0,$\downarrow$), (-4,-2,$\rightarrow$). \smallskip

\noindent \textbf{Platoon B}: symmetric moves of \textbf{Platoon A}. \smallskip
%(7,-2,$\leftarrow$), (9,-2,$\leftarrow$), (11,-2,$\leftarrow$), (13,-2,$\leftarrow$), (10,-4,$\uparrow$), (10,-2,$\leftarrow$), (10,0,$\downarrow$), (12,0,$\leftarrow$), (11,2,$\downarrow$), (12,-1,$\leftarrow$), (11,-2,$\leftarrow$), (8,-2,$\leftarrow$), (8,-4,$\uparrow$), (9,-2,$\leftarrow$), (6,-2,$\leftarrow$), (6,0,$\downarrow$), (6,2,$\downarrow$), (6,4,$\downarrow$), (8,1,$\leftarrow$), (10,1,$\leftarrow$), (7,3,$\downarrow$), (7,5,$\downarrow$), (9,2,$\leftarrow$), (8,4,$\downarrow$), (8,-1,$\leftarrow$), (10,-1,$\leftarrow$), (7,1,$\downarrow$), (7,3,$\downarrow$), (9,0,$\leftarrow$), (11,0,$\leftarrow$), (8,2,$\downarrow$), (7,-2,$\leftarrow$), (5,-2,$\leftarrow$), (7,-1,$\leftarrow$), (7,1,$\downarrow$), (9,0,$\leftarrow$), (8,-1,$\leftarrow$), (6,-1,$\leftarrow$), (7,0,$\leftarrow$), (6,2,$\downarrow$), (6,0,$\leftarrow$), (4,0,$\downarrow$), (4,-2,$\leftarrow$).

\noindent \textbf{Platoon C}:(-8,-6,$\uparrow$), (-10,-5,$\rightarrow$), (-12,-3,$\rightarrow$), (-10,-3,$\rightarrow$), (-9,-7,$\uparrow$), (-9,-5,$\uparrow$), (-7,-3,$\rightarrow$), (-6,-5,$\uparrow$), (-7,-5,$\uparrow$), (-6,-3,$\rightarrow$), (-8,-3,$\rightarrow$), (-8,-5,$\uparrow$), (-9,-3,$\rightarrow$), (-7,-3,$\rightarrow$), (-5,-3,$\rightarrow$). \smallskip

\noindent \textbf{Platoon D}: symmetric moves of \textbf{Platoon C}. \smallskip
%(8,-6,$\uparrow$), (10,-5,$\leftarrow$), (12,-3,$\leftarrow$), (10,-3,$\leftarrow$), (9,-7,$\uparrow$), (9,-5,$\uparrow$), (7,-3,$\leftarrow$), (6,-5,$\uparrow$), (7,-5,$\uparrow$), (6,-3,$\leftarrow$), (8,-3,$\leftarrow$), (8,-5,$\uparrow$), (9,-3,$\leftarrow$), (7,-3,$\leftarrow$), (5,-3,$\leftarrow$).

\noindent \textbf{Platoon E}: (-2,-7,$\uparrow$), (-3,-7,$\uparrow$), (-4,-7,$\uparrow$), (-6,-6,$\rightarrow$), (-5,-8,$\uparrow$), (-5,-6,$\rightarrow$), (-6,-8,$\uparrow$), (-7,-6,$\rightarrow$), (-7,-8,$\uparrow$), (-8,-8,$\uparrow$), (-8,-6,$\rightarrow$), (-6,-6,$\rightarrow$), (-3,-6,$\uparrow$), (-4,-6,$\uparrow$), (-4,-4,$\rightarrow$), (-2,-5,$\uparrow$). \smallskip

\noindent \textbf{Platoon F}: symmetric moves of \textbf{Platoon E}. \smallskip
%(2,-7,$\uparrow$), (3,-7,$\uparrow$), (4,-7,$\uparrow$), (6,-6,$\leftarrow$), (5,-8,$\uparrow$), (5,-6,$\leftarrow$), (6,-8,$\uparrow$), (7,-6,$\leftarrow$), (7,-8,$\uparrow$), (8,-8,$\uparrow$), (8,-6,$\leftarrow$), (6,-6,$\leftarrow$), (3,-6,$\uparrow$), (4,-6,$\uparrow$), (4,-4,$\leftarrow$), (2,-5,$\uparrow$).

\noindent \textbf{Platoon G}: (-3,-9,$\uparrow$), (-5,-9,$\rightarrow$), (-7,-9,$\rightarrow$), (-6,-11,$\uparrow$), (-6,-9,$\rightarrow$), (-3,-10,$\uparrow$), (-5,-10,$\rightarrow$), (-5,-12,$\uparrow$), (-4,-12,$\uparrow$), (-3,-11,$\uparrow$), (-5,-10,$\rightarrow$), (-4,-9,$\uparrow$), (-4,-7,$\rightarrow$), (-3,-9,$\uparrow$), (1,-7,$\uparrow$), (1,-9,$\uparrow$), (3,-9,$\leftarrow$), (5,-9,$\leftarrow$), (7,-9,$\leftarrow$), (6,-11,$\uparrow$), (6,-9,$\leftarrow$), (4,-9,$\leftarrow$), (3,-11,$\uparrow$), (5,-10,$\leftarrow$), (5,-12,$\uparrow$), (4,-12,$\uparrow$), (4,-14,$\uparrow$), (3,-8,$\leftarrow$), (3,-10,$\uparrow$), (5,-10,$\leftarrow$), (4,-8,$\leftarrow$), (1,-8,$\uparrow$), (1,-10,$\uparrow$), (3,-10,$\leftarrow$), (2,-12,$\uparrow$), (4,-12,$\leftarrow$), (3,-14,$\uparrow$), (1,-11,$\uparrow$), (1,-13,$\uparrow$), (1,-15,$\uparrow$), (3,-15,$\leftarrow$), (1,-16,$\uparrow$), (3,-12,$\leftarrow$), (1,-12,$\uparrow$), (0,-7,$\uparrow$), (0,-9,$\uparrow$), (-2,-9,$\rightarrow$), (-1,-11,$\uparrow$), (-1,-13,$\uparrow$), (-1,-15,$\uparrow$), (-3,-12,$\rightarrow$), (-2,-14,$\uparrow$), (-4,-13,$\rightarrow$), (-3,-10,$\rightarrow$), (2,-8,$\leftarrow$), (2,-10,$\uparrow$), (1,-10,$\uparrow$), (0,-8,$\uparrow$), (2,-8,$\leftarrow$), (0,-9,$\uparrow$), (0,-11,$\uparrow$), (-2,-11,$\rightarrow$), (0,-12,$\uparrow$), (-2,-12,$\rightarrow$), (0,-13,$\uparrow$), (2,-13,$\leftarrow$), (0,-14,$\uparrow$), (2,-14,$\leftarrow$), (-2,-13,$\rightarrow$), (-2,-8,$\rightarrow$), (-1,-10,$\uparrow$), (-1,-7,$\uparrow$), (-3,-7,$\rightarrow$), (-1,-8,$\uparrow$), (1,-6,$\uparrow$), (0,-6,$\uparrow$), (0,-8,$\uparrow$), (0,-10,$\uparrow$), (0,-12,$\uparrow$), (0,-14,$\uparrow$), (0,-16,$\uparrow$), (-2,-16,$\rightarrow$), (0,-17,$\uparrow$), (0,-15,$\uparrow$), (0,-13,$\uparrow$), (0,-11,$\uparrow$), (0,-9,$\uparrow$), (0,-7,$\uparrow$), (-1,-6,$\uparrow$), (0,-5,$\uparrow$). \smallskip

\noindent \textbf{Finale}: (-3,-3,$\rightarrow$), (3,-3,$\leftarrow$), (-1,-4,$\uparrow$), (1,-4,$\uparrow$), (-2,-2,$\rightarrow$), (0,-3,$\uparrow$), (2,-2,$\leftarrow$), (0,-2,$\uparrow$).

\bibliography{peg-solitaire}
	
\end{document}